\DeclareMathOperator{\SL}{\mathrm{SL}}
\newcommand{\N}{\mathbb{N}}
\newcommand{\Z}{\mathbb{Z}}
\newcommand{\R}{\mathbb{R}}
\newcommand{\C}{\mathbb{C}}
\renewcommand{\P}{\mathbb{P}}
\newcommand{\E}{\mathbb{E}}
\newcommand{\vp}{\varphi}
\newcommand{\ve}{\varepsilon}
\newtheorem{thm}{Theorem}[section]
\newtheorem{lem}[thm]{Lemma}
\newtheorem{prop}[thm]{Proposition}
\theoremstyle{definition}
\newtheorem{deffo}[thm]{Definition}
\newtheorem{remm}[thm]{Remark}
\numberwithin{equation}{section}
\title[Heavy tailed Anderson models]{Localization for heavy tailed Anderson models}
\author{Omar Hurtado}
\begin{document}
	
\begin{abstract}
    Using recent results on uniform large deviation estimates for random matrix products obtained in \cite{hurtado2025uniformestimatesrandommatrix}, we prove localization for one dimensional Anderson models with heavy tails.
\end{abstract}
\maketitle

\section{Introduction}
\subsection{Main results}
This article is concerned with the Anderson model with heavy tailed distributions. This is a fundamental model in the study of disordered materials, whose spectral properties have been extensively studied. Mathematically it is given by a (random) operator acting on $\ell^2(\Z)$ of the form

\begin{equation}\label{anderson} [H\psi](n) = \psi(n+1) + \psi(n-1) + V_n \psi(n) \end{equation}
where $V_n$ is an i.i.d. random potential. Such random operators allow one to study the effect on electronic transport of disorder in a material, and said operators are expected (and to a large degree, now known) to display a phenomenon known as Anderson localization. For any probability measure on $\mu$, we call the operator $H$ defined by \Cref{anderson} with $V_i$ having law $\mu$ the Anderson model with single site distribution $\mu$. There are stronger and weaker notions of localization, the stronger notions formulated in terms of moments of the position operator applied to the time evolution associated to $H$; this article focuses on purely spectral notions of localization.

\begin{deffo}
    An operator $H$ is spectrally localized if it has no continuous spectrum and its eigenvectors decay exponentially.
\end{deffo}

The current state of the art concerning spectral localization for general operators of the form (\ref{anderson}) is the following, from \cite{carmona1987anderson} by  Carmona, Klein and Martinelli:
\begin{thm}[\cite{carmona1987anderson}]
    If $\mu$ satisfies
    \begin{equation}\label{fractionalmoment}
        \int |x|^\alpha \,d\mu(x) < \infty
    \end{equation}
    for some $\alpha > 0$ and is not concentrated on a single point, with single site distribution $\mu$ is almost surely spectrally localized.
\end{thm}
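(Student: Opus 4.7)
The plan is to translate the spectral problem for $H$ into the dynamics of random transfer matrices on $\SL_2(\R)$ and then exploit the uniform large deviation estimate from \cite{hurtado2025uniformestimatesrandommatrix}. The transfer matrices are
\[
A_n(E) = \begin{pmatrix} E - V_n & -1 \\ 1 & 0 \end{pmatrix} \in \SL_2(\R),
\]
and any formal solution of $(H - E)\psi = 0$ propagates via $\binom{\psi(n+1)}{\psi(n)} = A_n(E) \cdots A_1(E)\binom{\psi(1)}{\psi(0)}$. The hypothesis $\int |x|^\alpha\,d\mu < \infty$ implies $\E[\log^+\|A_1(E)\|] < \infty$ for every $E$, so the Lyapunov exponent $\gamma(E) = \lim_n \frac{1}{n}\log\|A_n(E)\cdots A_1(E)\|$ is almost surely well-defined and finite.

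Next I would establish positivity: because $\mu$ is not concentrated at a point, the closed subgroup of $\SL_2(\R)$ generated by the support of $A_1(E)$ is noncompact and strongly irreducible, so Furstenberg's theorem yields $\gamma(E) > 0$ for every $E \in \R$. Continuity of $\gamma$ on $\R$ then gives $\gamma_0 := \inf_{E \in K}\gamma(E) > 0$ on any compact $K$. The main result of \cite{hurtado2025uniformestimatesrandommatrix} upgrades this to a uniform large deviation estimate of the form
\[
\sup_{E \in K} \P\left(\left|\tfrac{1}{n}\log\|A_n(E) \cdots A_1(E)\| - \gamma(E)\right| > \ve\right) \le Ce^{-cn},
\]
which, combined with $\gamma_0 > 0$ and Borel-Cantelli, yields almost sure uniform exponential growth of transfer matrices on compact energy windows.

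Finally, I would turn transfer-matrix growth into spectral localization by classical one-dimensional methods. Positivity of $\gamma$ throughout the spectrum rules out absolutely continuous spectrum by the Ishii-Pastur-Kotani theorem. To eliminate singular continuous spectrum and produce exponentially decaying eigenfunctions, I would combine Ruelle's theorem (which gives an exponential dichotomy for polynomially bounded generalized eigenfunctions at every $E$ satisfying the LDP) with the Simon-Wolff criterion, using that exponential decay of the Green's function $G(0,n;E+i0)$ follows from transfer-matrix growth via standard Wronskian estimates.

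The main obstacle is the uniformity in $E$ of the LDP: when $\mu$ has only a fractional moment, transfer matrices have unbounded entries and classical Le Page-type arguments do not apply. This is precisely what the author's prior work provides. With that input in hand, the remaining steps (Furstenberg positivity, Ishii-Pastur/Kotani, Ruelle, Simon-Wolff) follow by well-established routes, modulo some care in handling the exceptional set of energies on which the LDP or Ruelle's dichotomy could fail.
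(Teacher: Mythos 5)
The statement you are asked to prove is Theorem~1.2 of the paper, which is not proved there: it is quoted verbatim from Carmona--Klein--Martinelli \cite{carmona1987anderson} and used as background context for the paper's actual contribution (Theorem~1.4, which weakens the fractional moment hypothesis to a logarithmic one). So there is no in-paper proof to compare against; the relevant comparison is with the original CKM proof and with the MSA strategy the present paper adapts from it.

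Your outline has a genuine gap at the final step. You propose to pass from transfer-matrix growth to pure point spectrum with exponentially decaying eigenfunctions via Ruelle's theorem combined with the Simon--Wolff criterion. The Simon--Wolff criterion, however, relies on rank-one spectral averaging and therefore requires absolute continuity (or at least substantial regularity) of the distribution of the coupling parameter being averaged over. The entire point of the CKM theorem --- and the reason the paper highlights it as the state of the art for singular distributions --- is that $\mu$ may be purely singular (e.g.\ Bernoulli), with no density whatsoever. For such $\mu$, spectral averaging is unavailable and Simon--Wolff simply does not apply. Ishii--Pastur does rule out absolutely continuous spectrum from $\gamma(E) > 0$, but eliminating singular continuous spectrum and proving eigenfunction decay is exactly where the difficulty for singular $\mu$ lives, and your argument has no mechanism for it.

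What CKM actually do (and what the present paper imitates under weaker moments) is a two-ingredient multiscale analysis: (i) an initial-scale Green's function estimate, obtained from Le Page's large deviation bounds for transfer matrix products; and (ii) a Wegner estimate controlling the probability of an eigenvalue of a finite truncation falling near a fixed energy, obtained from H\"older continuity of the integrated density of states (which CKM prove for non-trivial $\mu$ with a logarithmic moment) combined with transfer-matrix estimates, not from spectral averaging. These two inputs feed an MSA scheme such as von Dreifus--Klein. Two smaller issues in your write-up: the invocation of \cite{hurtado2025uniformestimatesrandommatrix} is anachronistic for a 1987 theorem (and also unnecessary, since under a fractional moment one already has exponential large deviation bounds from Le Page, which is what CKM use); and Ruelle/Oseledec dichotomies hold almost surely for each fixed $E$, not ``at every $E$ satisfying the LDP'' --- managing the uncountable union over $E$ is precisely what MSA or a single-scale argument of Jitomirskaya--Zhu type is designed to handle, and cannot be waved away.
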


Importantly, Carmona, Klein and Martinelli make no regularity assumptions in their work, and we will not either. In this work we treat models which satisfy the weaker moment condition:

\begin{equation}\label{logmoment}
    \int (\log^+(|x|))^p\,d\mu(x) < \infty
\end{equation}
(Here $\log^+(x) = \max\{\log x, 0\}$.) We obtain certain non-trivial estimates as soon as $p > 1$, but our main result concerns the case where $p > 11$.

By applying results obtained in \cite{hurtado2025uniformestimatesrandommatrix} and combining them with variations on arguments from \cite{carmona1987anderson} and a result from \cite{von1989new} which we use as a black box, we obtain the following theorem:

\begin{thm}\label{mainlocalthm}
    If $\mu$ satisfies \Cref{logmoment} for $p > 11$ and is not supported on a single point, then the Anderson model associated to $\mu$ is almost surely spectrally localized.
\end{thm}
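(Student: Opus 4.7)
The plan is to run the Carmona--Klein--Martinelli framework for proving spectral localization in one dimension, but with the fractional-moment large deviation estimate replaced by the (weaker but uniform in energy) large deviation estimate for random matrix products established in \cite{hurtado2025uniformestimatesrandommatrix} under the log-moment hypothesis with $p > 11$.

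First I would set up the transfer matrix formalism. Any formal solution of $H\psi = E\psi$ is propagated from its initial data by the $\SL(2,\R)$ cocycle $A_n^E = T_n^E \cdots T_1^E$ with $T_k^E = \begin{pmatrix} E - V_k & -1 \\ 1 & 0 \end{pmatrix}$. The hypothesis \eqref{logmoment} with $p \geq 1$ guarantees $\E[\log^+\|T_1^E\|] < \infty$ locally uniformly in $E$, so the Lyapunov exponent $\gamma(E) = \lim \tfrac{1}{n}\E\log\|A_n^E\|$ is well defined on $\R$. Because $\mu$ is not a single atom, the subgroup of $\SL(2,\R)$ generated by $\operatorname{supp}(T_1^E)$ is noncompact and strongly irreducible, so by Furstenberg's theorem $\gamma(E) > 0$ for every $E$.

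Second, I would invoke the uniform large deviation estimate from \cite{hurtado2025uniformestimatesrandommatrix}, which under $p > 11$ should give, for every compact $K \subset \R$, a bound of the form
\[
\P\!\left(\sup_{E \in K}\Bigl|\tfrac{1}{n}\log\|A_n^E v\| - \gamma(E)\Bigr| > \ve\right) \le c(n,\ve,K)
\]
for every unit $v$, with $c(n,\ve,K)$ decaying at a quantitative (subexponential) rate summable along a sufficiently sparse sequence of scales. A Borel--Cantelli argument then gives, almost surely, exponential growth of $\|A_n^E\|$ uniformly over $E \in K$.

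Third, I would carry out the spectral consequences along Carmona--Klein--Martinelli lines. Kotani theory for ergodic potentials on $\Z$ rules out absolutely continuous spectrum because $\gamma(E) > 0$ Lebesgue almost everywhere on the spectrum. By Shnol's theorem, the spectral measure is concentrated on energies admitting a polynomially bounded generalized eigenfunction; combined with the uniform exponential growth of $A_n^E$, any such eigenfunction must decay exponentially at $\pm\infty$. This is where the black box from \cite{von1989new} plausibly enters: it provides the mechanism for converting quantitative (subexponential) transfer-matrix or Green's function bounds into genuine exponential eigenfunction decay and thereby pure point spectrum.

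The main obstacle will be reconciling the subexponential rate of the LDT with the need for almost sure statements \emph{uniform} in $E$. Since one cannot afford exponential probability decay, the Borel--Cantelli step must be executed along a carefully chosen sequence of length scales combined with a sufficiently fine $\ve$-net in energy, exploiting H\"older continuity of $\gamma(E)$ and Lipschitz dependence of the $T_k^E$ on $E$. The threshold $p > 11$ is almost certainly pinned down by requiring the tail probabilities from \cite{hurtado2025uniformestimatesrandommatrix}, summed against the cardinality of the energy net, to remain summable in $n$; one expects the proof to come down to carefully balancing these two dependencies.
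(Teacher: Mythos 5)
Your proposal correctly identifies the two inputs that are genuinely in play --- the transfer matrix formalism, Furstenberg positivity, and the uniform polynomial large deviation estimate from \cite{hurtado2025uniformestimatesrandommatrix} --- but it misses the actual architecture of the proof and substitutes a strategy that the paper explicitly rules out. The paper does \emph{not} run a Borel--Cantelli argument over a sparse sequence of scales combined with an energy net, nor does it invoke Kotani theory or Shnol's theorem directly. Rather, it verifies the two standing hypotheses \ref{initialscale} (initial scale estimate) and \ref{Wegner} (Wegner estimate) of the von Dreifus--Klein multiscale analysis theorem \Cref{msaassumptions}, which is a full MSA black box, not a device for ``converting transfer-matrix bounds into eigenfunction decay.'' Your description of \cite{von1989new} fundamentally mischaracterizes what it provides and therefore what must be supplied to it. Moreover, the paper explicitly notes that single-scale approaches of the kind you sketch (Borel--Cantelli over scales and energies) ``leverage exponential large deviation estimates, which are not available in the heavy-tail regime''; the polynomial rate $L^{-p}$ is not summable along the dense net of scales such arguments require, so the road you propose is closed.

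The most serious omission is the Wegner estimate. This is the technical heart of the paper (all of \Cref{wegthm}): one must show that for $\beta \in (0,1)$ and appropriate $\eta, p'$,
\[ \P\bigl[\mathrm{dist}(\sigma(H_{\Lambda_L}),E) \le e^{-L^\beta}\bigr] \le L^{1-\eta p'}. \]
This does not follow automatically from the large deviation estimate; it requires adapting the Carmona--Klein--Martinelli argument using H\"older continuity of the integrated density of states (\Cref{holdercor}) together with a delicate moment computation involving the modified logarithm $\log^{p\star}$ to control the energy-perturbation of short transfer matrix blocks. Your proposal never addresses how to control the spectrum of finite truncations, which is exactly what MSA needs. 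Finally, your guess about where $p>11$ comes from is also off: it is not a summability-versus-energy-net trade-off, but the numerology forced by the MSA hypothesis $q_2 > 4q_1 + 6$ with $q_1 > 1$, which demands $q_2 > 10$; since the Wegner bound delivers roughly $q_2 \approx \eta p' - 1$ with $\eta < \beta < 1$ and $p' < p$, one is forced to $p > 11$. The initial scale estimate \Cref{initialscalethm2} is, by contrast, the easy part and does follow more or less directly from \Cref{ldesthm} as you suggest.
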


At least in terms of relaxing the moment assumptions on $\mu$, this is the first improvement on the work of Carmona, Klein and Martinelli which does not require any regularity assumption, and partially answers a conjecture posed in \cite{Macera2022}.

\subsection{Background on random Schr\"odinger type operators}

These models have been quite well studied, and are fundamental models in the study of disordered materials going back to the groundbreaking (and Nobel Prize winning) work of P. W. Anderson near the middle of the twentieth century. We do not try and give a full account of the history, but we mention some useful reference works, as well as works concerning localization for similar models, with special attention paid to those which treat singular distributions in one dimension.

Indeed, good accounts for the general, possibly multidimensional, case appear in \cite{cycon2009schrodinger,aizenman2015random}. The one dimensional theory is markedly different because of the transfer matrix method. While there are other methods which are unique to one dimension, e.g. the spectral averaging of \cite{kunz1980spectre}, the modern study of random one-dimensional models (and closely related one-dimensional ergodic models of various other flavors) is dominated by the use of this method, which studies ergodic operators by studying associated linear cocycles. For an account of the specifically one dimensional theory, discussing also work for more general ergodic models, see e.g. \cite{damanik2025one}.

While there were many results capable of treating the case where the random noise is sufficiently regular (see \cite{frohlich1983absence,frohlich1985constructive,kunz1980spectre,goldsheid1977random}), the first result capable of treating the case with singular potentials (e.g. $V_n$ Bernoulli variables) was the groundbreaking work of Carmona, Klein, Martinelli in \cite{carmona1987anderson}. Already, some aspects of the theory of random matrix products were exploited here to facilitate the proof. This proof combined the Multiscale Analysis (MSA) method developed in \cite{frohlich1983absence} with certain facts from the theory of random matrix products. 

Since then, various ``single-scale'' proofs of localization in one dimension which eschew MSA and leverage even further properties of random matrix products have appeared, see e.g. \cite{Jitomirskaya2019,Bucaj2017LocalizationFT,gorodetski2021parametric}. See also \cite{rangamani2019singular, Macera2022, hurtado2023lifting} for more works which used methods introduced in \cite{Jitomirskaya2019} to prove localization for wider classes of random operators, and \cite{gorodetski2025non,damanik2025localization} which built on methods from \cite{gorodetski2021parametric}. (It is worth mentioning also the important paper \cite{shubin1998some} which provided a proof based on harmonic analysis via results in harmonic analysis concerning a variant of the uncertainty principle.) In the ``single-scale'' proofs, uniform large deviation estimates are of crucial importance. It would be interesting to see if a ``single-scale'' proof of the present results is possible; all of the current single-scale approaches leverage exponential large deviation estimates, which are not available in the heavy-tail regime.

As this history makes clear, lack of regularity in $\mu$ complicates proofs of localization. (This is also true in higher dimensions, and in fact localization for e.g. the Bernoulli-Anderson model was only recently solved in dimensions 2 \cite{ding2020localization} and 3 \cite{li2022anderson}, and remains open in dimensions 4 and higher; this work used ideas developed for related continuum models in the landmark work of Bourgain and Kenig \cite{bourgain2005localization}.) The highly technical version of MSA developed in \cite{bourgain2005localization, klein2012comprehensive} and various works which built upon these excepted, applications of MSA generally require some moment condition but can treat H\"older regular potentials in higher dimensions. As soon as one has e.g. $\mu$ given by a bounded density with respect to Lebesgue measure, one does not require any moment condition. Indeed, the celebrated fractional moment method introduced by Aizenman and Molchanov in \cite{aizenman1993localization} gives the following:

\begin{thm} \cite{aizenman1993localization}
    Let $\mu$ be a distribution generated by a bounded density on $\R$. Then the Anderson model generated by $\mu$ is almost surely spectrally localized.
\end{thm}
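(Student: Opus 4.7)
The plan is to implement the fractional moment method of Aizenman--Molchanov. Writing $G(x,y;z) := \lin \delta_x, (H-z)^{-1} \delta_y \rin$ for the Green's function, the goal is to show that for some $s \in (0,1)$ and constants $C, m > 0$,
\begin{equation}\label{fmplan}
\E\bigl[|G(x,y;z)|^s\bigr] \leq C e^{-m |x-y|}
\end{equation}
uniformly in $z$ off the real axis (with $\re{z}$ in any fixed compact set). Exponential decay of fractional Green's function moments is known to imply pure point spectrum with exponentially decaying eigenfunctions via the Simon--Wolff criterion, so \Cref{fmplan} delivers spectral localization.

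The first step is an a priori bound $\E[|G(x,x;z)|^s] \leq C_s$ uniformly in $z$ and $x$. The rank-one perturbation formula gives $G(x,x;z) = (V_x - z - \Sigma_x(z))^{-1}$ with $\Sigma_x(z)$ measurable with respect to $(V_n)_{n\neq x}$. Conditioning on this $\sigma$-algebra and integrating $V_x$ against its bounded density $\rho$ yields
\[
\E\bigl[|G(x,x;z)|^s \,\big|\, (V_n)_{n \neq x}\bigr] = \int_\R \frac{\rho(v)\,dv}{|v - z - \Sigma_x(z)|^s}.
\]
Splitting the integral according to whether $|v - \re{z + \Sigma_x(z)}| \leq 1$, and using $|v - w| \geq |v - \re{w}|$ for complex $w$, bounds this by $\|\rho\|_\infty \cdot \tfrac{2}{1-s} + 1 < \infty$ for any $s < 1$, uniformly in both $z$ and $\Sigma_x(z)$. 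Critically, no moment assumption on $\mu$ is used; only boundedness of $\rho$.

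Next, I would derive off-diagonal exponential decay. In one dimension, iterating the geometric resolvent identity across each bond between $x$ and $y$ expresses $|G(x,y;z)|^s$ as a product of fractional moments of half-line diagonal Green's functions, each of which is controlled by the a priori estimate upon conditioning out $V_x, V_{x+1}, \ldots$ sequentially using the Markov structure of the potential. This reduces \Cref{fmplan} to a per-bond contraction bound $\rho_0 < 1$. The required strict smallness can be extracted either by refining the a priori integral (choosing $s$ close to $1$ and balancing against $\|\rho\|_\infty$) together with a pigeonhole argument, or, in a one-dimensional spirit, by invoking positivity of the Lyapunov exponent of the transfer matrix cocycle, which holds by Furstenberg's theorem since $\mu$ is not a point mass.

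The main obstacle is establishing this per-bond contraction quantitatively and uniformly in $z$; this is precisely where the bounded density hypothesis does the essential work, since without it the a priori bound collapses entirely and the method cannot even begin. This is also the structural reason why \Cref{mainlocalthm}, which imposes no regularity on $\mu$, cannot be attacked via fractional moments and must instead rely on uniform estimates for random matrix products.
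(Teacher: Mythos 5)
The paper cites this result to \cite{aizenman1993localization} and neither proves nor outlines it, so there is no internal argument to compare against; what remains is whether your sketch of the fractional moment method is sound on its own terms. Your a priori step is correct: the rank-one representation $G(x,x;z)=(V_x-z-\Sigma_x(z))^{-1}$, conditioning on $(V_n)_{n\neq x}$, and the integral bound against the bounded density give $\E[|G(x,x;z)|^s]\leq C_s$ uniformly in $z$ with no moment hypothesis, and you correctly identify this as where the hypothesis does its work.

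The gap is in the contraction step, and the first route you offer is not a detail that can be filled in later. You propose extracting $\rho_0<1$ ``by refining the a priori integral (choosing $s$ close to $1$ and balancing against $\|\rho\|_\infty$) together with a pigeonhole argument.'' But the a priori constant $C_s = \|\rho\|_\infty\cdot\tfrac{2}{1-s}+1$ is always $\geq 1$ and diverges as $s\to 1$; no choice of $s$ makes it small, and feeding a per-bond factor $\geq 1$ into the decoupling iteration yields exponential growth, not decay. Per-bond smallness from the a priori bound alone is exactly the \emph{high-disorder} version of FMM, where $\|\rho\|_\infty\sim 1/\lambda$ is traded against the hopping; it is not available at fixed disorder and all energies, which is what the theorem asserts. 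The Lyapunov-exponent route you mention as an alternative is the correct direction for the one-dimensional case, but as stated it is half a plan: positivity of the exponent does not by itself produce a fractional-moment contraction. One must pass through a quantitative large deviation estimate (\`a la Le Page) to get smallness of a \emph{finite-volume} fractional Green's function moment at some fixed scale $L_0$, and then propagate via a finite-volume criterion; moreover Furstenberg/Le Page themselves require $\E[\log^+\|T_0^E\|]<\infty$, which bounded density alone does not supply and which is a hypothesis the FMM framework is expressly designed to avoid. By presenting the two options as interchangeable fallbacks, the sketch obscures that one fails outright and the other is where essentially all of the remaining work lives.
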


In particular, heavy tailed models with regular potentials can be treated via FMM; our localization results are novel in the context of potentials with heavy tails and a lack of regularity.
\subsection{Approach of the paper}
The main new tool is the use of certain uniform large deviation estimates which were proven by Raman and the author in \cite{hurtado2025uniformestimatesrandommatrix}. The results of said paper are reasonably general, requiring only irreducibility and proximality assumptions. Specialized to the context of the current paper, we specifically obtain polynomial large deviation estimates for the associated transfer matrices when \Cref{logmoment} is satisfied for $p>1$; this is \Cref{ldesthm}. This more or less automatically yields one of the necessary assumptions for a localization proof via MSA for $p$ sufficiently large; the so-called initial scale estimate \Cref{initialscalethm2}.

These polynomial large deviation estimates also allow us to prove the second necessary assumption; the Wegner estimate which controls the probability of an eigenvalue falling with a small interval, in this case \Cref{wegthm}. However, here it is not quite automatic, and we prove it by adapting an argument appearing in \cite{carmona1987anderson} to prove a similar bound, though it is worth pointing out that we obtain a weaker bound under weaker assumptions. We prove a polynomial bound, and though the outline is the same as in \cite{carmona1987anderson} there are a few technical difficulties only arising in our case, and at the same time we simplify certain parts of the argument using our bounds which are uniform in energy. (Such uniform bounds were not yet proven when \cite{carmona1987anderson} was published, though the pointwise versions were and figured crucially in the proof.) The polynomial bounds we obtain suffice to prove localization via a theorem of von Dreifus and Klein in \cite{von1989new}, which makes use of MSA; we use it as a black box and do not carry out an MSA argument ourselves.
\section*{Acknowledgements}
The author thanks Lana Jitomirskaya for suggesting the question of localization for heavy tailed distributions. The author was supported in part by NSF Grants Nos.
DMS-2052899 and DMS-2155211, and Simons Grant No. 896624.
\section{Preliminaries}
\subsection{Notation}

Throughout, we always use $\P$ to denote the probability of an event. Generally, we use calligraphic letters to denote events, and $\mathcal{A}^C$ is the complement of the event $\mathcal{A}$. For an operator $H$ on a Hilbert space, $\sigma(H)$ denotes the spectrum of $H$, i.e. the set of those $E \in \C$ such that $H-E$ does not have a bounded inverse. All operators considered are self-adjoint (by our assumption that $V$ is real-valued) and so in particular the spectrum is actually a subset of $\R$. Whenever we deal with random phenomena we use $\P[\mathcal{A}]$ to denote probability of an event $\mathcal{A}$; for a random variable, we denote the probability of an event depending on its value by e.g. $\P[X \leq \ve]$.

\subsection{Generalized eigenfunctions and transfer matrices}

Much of this section recalls facts which are well known in the study of random or more generally ergodic Schr\"odinger operators; the fundamentally new inputs are \Cref{ldesthm} (which is proven in \cite{hurtado2025uniformestimatesrandommatrix}) and \Cref{initialscalethm}. More comprehensive accounts of many of these identities appear in e.g. \cite{Jitomirskaya2019, rangamani2019singular, hurtado2023lifting}.

Fundamental objects in the study of spectral questions concerning the Anderson model are generalized eigenfunctions:

\begin{deffo}
    We say $E$ and $\psi \in \R^\Z$ respectively are a generalized eigenvalue and generalized eigenfunction respectively if
    \begin{equation}\label{eigeneq}
        H\psi = E\psi
    \end{equation}
    and $\psi$ is of strictly slower than exponential growth, i.e.
    \[ \limsup_{|n| \rightarrow \infty} \frac{1}{|n|} \log |\psi(n)|\leq 0 \]
\end{deffo}

Clearly all bona fide $\ell^2$ eigenfunctions are generalized eigenfunctions, but formal solutions with e.g. very slow decay or even polynomial growth are also permitted. It is a general fact that ``intermediate'' growth which is superpolynomial but subexponential is impossible for formal solutions, so one can restrict to the class of polynomially bounded formal solutions, but this fact is not important for our purposes.

The importance of these objects in spectral theory is a consequence of a result known as Sch'nol's theorem, which roughly says that the spectrum of $H$ is supported on the set of the generalized eigenvalues of $H$. We will not make this precise since we do not need to make use of Sch'nol's theorem directly, but it is implicit via our use of \Cref{msaassumptions} below (which is a special case of a result of von Dreifus and Klein). Because of this relationship, in localization arguments asymptotics of formal solutions end up being central. In the one dimensional context, these can be studied fruitfully via the transfer matrix method. Specifically, we introduce first the one step transfer matrices:

\begin{equation}
    T_n^E = \begin{pmatrix} E - V_n & -1 \\ 1 & 0\end{pmatrix}
\end{equation}

For $\psi$ solving $H\psi = E\psi$, we have:

\begin{equation}\label{onesteptmeq}
    \begin{pmatrix} \psi(n+1) \\ \psi(n) \end{pmatrix} = T_n^E \begin{pmatrix} \psi(n) \\ \psi(n-1) \end{pmatrix}
\end{equation}

Naturally, one can multiply one step transfer matrices to get a matrix which gives one $\begin{pmatrix} \psi(n+1) \\ \psi(n) \end{pmatrix}$ from any $\begin{pmatrix} \psi(m+1) \\ \psi(m) \end{pmatrix}$; we define:

\[
    T_{n,m}^E := \begin{cases}
        T_n^ET_{n-1}^E\cdots T^E_{m+1}, \quad &n>m\\
        I, \quad &n=m\\
        (T_{n+1}^E)^{-1}(T_n^E)^{-1}\cdots (T_m^E)^{-1}, \quad &n<m
    \end{cases}
\]

By induction, \Cref{onesteptmeq} gives the following:
\begin{equation}\label{tmeq}
    \begin{pmatrix} \psi(n+1)\\ \psi(n) \end{pmatrix} = T_{n,m}^E \begin{pmatrix} \psi(m+1)\\ \psi(m) \end{pmatrix}
\end{equation}

Note that $T_{n,m}^E$ is a product of $|n-m|$ i.i.d. random matrices, with distribution of each term depending on the sign of $n-m$; we will focus on the case $n>m$ specifically. We let $\mu^E$ be the distribution of $T_0^E$ on $\SL(2,\R)$. Under certain assumptions on the distribution, some of them ``geometric'' and some in terms of moments, we can obtain exponential behavior of random products with high probability. By exponential behavior, we refer to linear growth of $\log\|T_{[a,b]}\|$ and related quantities in $|b-a|$; in particular the following well-known fact is a consequence of work of Furstenberg and collaborators:
\begin{prop}[\cite{FurstenbergNRP,FurstenbergKifer,FurstenbergKesten}]
    Under the assumptions of \Cref{mainlocalthm}, there is a function $\lambda:\R \rightarrow \R$ which is strictly positive and continuous such that
    \[\lim_{b-a \rightarrow +\infty} \frac{1}{b-a} \log\|T_{[a,b]}^E\| = \lambda(E)\]
\end{prop}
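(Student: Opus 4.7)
The plan is to separate the statement into three claims---existence of the limit, strict positivity of $\lambda$, and continuity of $\lambda$ in $E$---and verify in our setting the hypotheses of the three cited works of Furstenberg and collaborators.

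For existence I would start from the crude bound $\max(\|T^E_0\|,\|(T^E_0)^{-1}\|)\le 2+|E|+|V_0|$, which together with the hypothesis $\int \log^+|x|\,d\mu<\infty$ (implied by $p>11>1$) gives $\E\log^+\|T^E_0\|<\infty$. Kingman's subadditive ergodic theorem applied to the subadditive sequence $\log\|T_{[0,n]}^E\|$ then produces a deterministic $\lambda(E)\in[0,\infty)$ with almost sure convergence, and the same argument in the negative direction gives the same value.

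For positivity I would apply Furstenberg's theorem, which yields $\lambda(E)>0$ as soon as the closed subgroup $G^E_\mu\subset \SL(2,\R)$ generated by $\mathrm{supp}(\mu^E)$ is noncompact and leaves no finite union of lines in $\R^2$ invariant. Since $\mu$ is not a point mass, pick distinct $v_1,v_2\in\mathrm{supp}(\mu)$; a direct computation gives
\[
T^E_{v_1}(T^E_{v_2})^{-1}=\begin{pmatrix}1 & v_2-v_1\\ 0 & 1\end{pmatrix},
\]
a nontrivial unipotent whose powers are unbounded and whose action on lines in $\R^2$ fixes $\R e_1$ and has infinite orbit on every other line. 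Noncompactness is immediate, and any finite $G^E_\mu$-invariant set of lines would have to be contained in $\{\R e_1\}$; but $T^E_{v_1}e_1=(E-v_1)e_1+e_2\notin \R e_1$, contradicting that $T^E_{v_1}\in G^E_\mu$ preserves $\R e_1$.

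For continuity in $E$, the natural route in the Schr\"odinger setting is the Thouless formula $\lambda(E)=\int\log|E-E'|\,dk(E')$, which reduces continuity of $\lambda$ to continuity of the integrated density of states $k$; the latter is classical for one-dimensional i.i.d.\ models under just a log-moment assumption. The main obstacle is exactly this continuity step: the standard quantitative continuity theorems for $\lambda$ (e.g.\ Le Page's H\"older estimates) require finite exponential moments and are unavailable in the heavy-tail regime, so one must either exploit the Schr\"odinger-specific Thouless formula as above or appeal carefully to the Furstenberg--Kifer continuity theorem, whose hypotheses of noncompactness, strong irreducibility, and strict positivity have just been verified.
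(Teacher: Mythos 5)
Your proposal takes the same route the paper does: the paper states this as a citation to Furstenberg--Kesten (existence), Furstenberg (positivity), and Furstenberg--Kifer (continuity), with a remark delegating the verification of the hypotheses to the ``standard argument'' in Carmona--Klein--Martinelli; you are simply carrying that verification out explicitly, and the key computation $T^E_{v_1}(T^E_{v_2})^{-1}=\begin{pmatrix}1 & v_2-v_1\\ 0 & 1\end{pmatrix}$ with the ensuing noncompactness and strong irreducibility argument is exactly the standard one. One caveat on the continuity step: the Furstenberg--Kifer route is the one the paper cites and is the safer of your two alternatives here, since it only requires weak-$\ast$ continuity of $E\mapsto\mu^E$ together with a uniform domination $\|T_0^E\|\le 2+|E|+|V_0|$ for $E$ in a compact interval, both of which are immediate; by contrast your Thouless-formula alternative requires $\int\log(1+|E'|)\,dk(E')<\infty$, which is not obvious when $\mu$ has heavy tails and hence $H$ has unbounded spectrum, so if you go that way you would need to supply an argument that the integrated density of states has the necessary logarithmic integrability (it does follow from the log-moment hypothesis, but it is a real step, not ``classical'' without comment).
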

A similar statement holds for the quantities $\log \|T_{[a,b]}^E x\|$ and $\log|\langle x, T_{[a,b]}^E y \rangle|$, where $x, y$ are non-zero elements of $\R^2$. (There is also $\lambda^-(E)$ corresponding to the limit $b-a \rightarrow - \infty$; in many cases of interest, including the particular case of Schr\"odinger cocycles it turns out that this ``backwards'' Lyapunov exponent coincides with the ``forwards'' one.)
\begin{remm}
    To be precise all the works cited treat products of random matrices, and make various assumptions, said assumptions varying from paper to paper. Furstenberg and Kesten demonstrated the existence of the limit under very weak assumptions; later Furstenberg showed positivity under assumptions which hold in the present context. Continuity was demonstrated by Furstenberg and Kifer under assumptions which also hold in the present context. That the various hypotheses necessary to apply these results are satisfied is a standard argument, see e.g. \cite{carmona1987anderson}.
\end{remm}
Under strong moment conditions (like e.g. those assumed in the paper of Carmona, Klein and Martinelli) strong quantitative bounds on this convergence are known; these were proven (in a pointwise form) by Le Page in \cite{les1982theoremes} and exploited in the work of Carmona, Klein and Martinelli. Versions uniform in energy have since been proven in e.g. \cite{tsay1999some,Bucaj2017LocalizationFT,duarte2016lyapunov, duarte2020large}; note that except for the work of Tsay all of these results assume boundedness, which is stronger than the assumptions made by Carmona, Klein and Martinelli. 

In \cite{hurtado2025uniformestimatesrandommatrix}, Raman and the author showed (among other things) uniform estimates under weaker moment assumptions than those previously considered, in \cite[Theorem 1.20]{hurtado2025uniformestimatesrandommatrix}.

\begin{thm}[\cite{hurtado2025uniformestimatesrandommatrix}]\label{ldesthm}
We let $H$ be a random operator as in \Cref{mainlocalthm}, and $T_{[a,b]}^E$ the associated transfer matrices. Then for any $I \subset \R$ compact and $\ve > 0$, there is $C> 0$ (depending on $\mu$, $I$, $\ve$) such that all of the following estimates hold:

\begin{align*}\P[|\log\|T_{[a,b]}^E\| - \lambda(E)|>n\ve] &\leq CL^{-p}& \\
    \P[|\log\|T_{[a,b]}^E x\| - n\lambda(E)| > n\ve ] &\leq CL^{-p}\quad& \text{for}\quad &\|x\|=1\\
    \P[|\log|\langle x, T_{[a,b]}^E y\rangle - n\lambda(E)| > n\ve ] &\leq CL^{-p}\quad& \text{for}\quad &\|x\|=\|y\|=1
\end{align*}
\end{thm}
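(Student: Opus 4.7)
The plan is to prove the three estimates together, establishing the norm estimate first and then deducing the vector and matrix coefficient versions via proximality. The core of the argument is a truncation scheme that splits the difficulty between a small-probability ``bad'' event where some $V_n$ is anomalously large, and a bounded-entry random matrix product on the complementary event to which classical Le Page type arguments apply.

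Concretely, fix $L = b-a$ and a cutoff $K_L$ to be chosen. Let $\tilde V_n := V_n \mathbf 1_{|V_n| \leq K_L}$ and $\tilde T_n^E$ the corresponding truncated transfer matrices, which satisfy $\|\tilde T_n^E\| \leq C(1+K_L)$. By \Cref{logmoment} and Markov's inequality, $\P(|V|>t) \leq C(\log t)^{-p}$, so the probability that any $V_n$ on $[a,b]$ exceeds $K_L$ is at most $CL(\log K_L)^{-p}$; taking $\log K_L \asymp L^{(p+1)/p}$ makes this $\lesssim L^{-p}$, and off this event $T^E = \tilde T^E$. For the bounded cocycle, spectral gap / exponential mixing on the projective line yields a pointwise large deviation estimate of the form $\P(|\log\|\tilde T_{[a,b]}^E\| - L\tilde\lambda(E)| > L\varepsilon/2) \leq \exp(-c(K_L)L)$, where $c(K_L)$ degrades at worst polylogarithmically in $K_L$, so that the right-hand side is $\ll L^{-p}$. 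The truncated Lyapunov exponent $\tilde\lambda(E)$ differs from $\lambda(E)$ by at most $\E[\log\|T_0^E\| - \log\|\tilde T_0^E\|]$, which tends to zero by the moment assumption and is therefore absorbed into the $L\varepsilon$ slack once $L$ is large.

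Upgrading to uniformity in $E \in I$ is the delicate step. A naive net argument is hopeless, since the Lipschitz constant of $\log\|T_{[a,b]}^E\|$ in $E$ can be of order $K_L^L$, far exceeding what a polynomial-size net can accommodate. The workaround is to apply the pointwise estimate on many shorter scales and glue them via an avalanche-principle-type argument, using that the averaged quantity $\frac{1}{n}\E[\log\|T_{[0,n]}^E\|]$ is continuous in $E$ with modulus controlled entirely by short-scale bounds. The vector and matrix coefficient versions then follow by splitting the product at one or two internal points a short distance $\ell = \lceil \log L \rceil$ from the endpoints: by proximality and mixing on the projective line, the direction of the short factor applied to $x$ lands off the contracting cone of the remaining long factor except on an event of probability $O(L^{-p})$, converting the norm estimate into an estimate on $\log\|T_{[a,b]}^E x\|$ and ultimately on $\log|\langle x, T_{[a,b]}^E y\rangle|$.

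The main obstacle is reconciling the aggressive truncation level $K_L$ (essentially exponential in $L$, forced by the heavy tails) with the uniformity in $E$ step. One needs a version of Le Page's theorem in which the spectral gap depends only mildly, ideally polylogarithmically, on the matrix-norm cutoff, together with an avalanche-type scheme that propagates the short-scale bounds to full uniformity without invoking a net of prohibitive cardinality. Threading these two constraints simultaneously is what makes the argument substantial and is why a moderately large value of $p$ is required to extract polynomial uniform estimates from only a logarithmic moment.
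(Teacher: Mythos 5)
This theorem is \emph{cited} by the paper, not proved in it: it is stated as Theorem 2.5 with an explicit reference to \cite[Theorem 1.20]{hurtado2025uniformestimatesrandommatrix} and used purely as a black box. So there is no proof in the present paper against which to compare yours. I can, however, assess the internal soundness of your proposal, and it has a genuine gap in the truncation step.

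The gap is a quantitative incompatibility. You truncate at $K_L$ with $\log K_L \asymp L^{(p+1)/p}$ so that the union bound $\P[\exists n \in [a,b] : |V_n| > K_L] \lesssim L(\log K_L)^{-p} \asymp L^{-p}$ hits the target rate. That forces the truncated one-step matrices to have $\log\|\tilde T_n^E\|$ bounded only by something of order $\log K_L \asymp L^{(p+1)/p}$, which exceeds $L$. You then invoke an exponential large deviation bound $\exp(-c(K_L)L)$ for the bounded cocycle with $c(K_L)$ ``degrading at worst polylogarithmically in $K_L$''. But even granting the most favorable such bound (e.g.\ a Bernstein or Azuma-type estimate giving $c(K_L) \sim \varepsilon/\log K_L$, since $\log K_L$ is the relevant almost sure bound on the log-norm summands), the resulting exponent is
\[
  c(K_L)\,L \;\sim\; \frac{\varepsilon L}{\log K_L} \;\asymp\; \varepsilon\, L^{1-(p+1)/p} \;=\; \varepsilon\, L^{-1/p} \;\longrightarrow\; 0 .
\]
So $\exp(-c(K_L)L)$ does not even tend to zero, let alone beat $L^{-p}$; the truncation level you are forced into is too aggressive for the bounded-cocycle analysis to recover anything. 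You cannot fix this by lowering $K_L$: for the bounded analysis to give $\exp(-cL/\log K_L) \ll L^{-p}$ you need $\log K_L \lesssim L/\log L$, but then the bad event has probability $\gtrsim L(L/\log L)^{-p} = L^{1-p}(\log L)^p$, which misses $L^{-p}$ by (more than) a factor of $L$. The two constraints on $K_L$ cannot be met simultaneously, so the truncate-and-union-bound scheme cannot deliver the stated rate on its own.

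What this means is that a bona fide proof must obtain polynomial large deviations \emph{directly} from the polynomial moment $\E[(\log^+\|T_0^E\|)^p] < \infty$, rather than by reducing to an exponential estimate on a bounded event. For sums of i.i.d.\ scalars with a finite $p$-th moment this is the content of Fuk--Nagaev type inequalities, which produce a polynomial tail $C n^{1-p}/\varepsilon^p$ plus a subgaussian correction; the analogous statement for random matrix cocycles (via a martingale-difference decomposition of $\log\|T_{[a,b]}^E x\|$ and moment inequalities for the martingale increments, in the spirit of Benoist--Quint or Cuny--Dedecker--Merlev\`ede) is the natural route, and it also explains why the advertised rate is polynomial in $L$ with exponent tied to the moment order $p$ rather than exponential. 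Your remarks about uniformity in $E$ and the avalanche-principle-style gluing, and about splitting off short end-blocks to pass from the norm to vector and matrix-coefficient versions using proximality, are reasonable and do reflect standard techniques in this area, but they sit downstream of a pointwise estimate you have not actually produced.

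As a side note, the statement as printed in the paper contains minor typos: the first display should read $|\log\|T_{[a,b]}^E\| - n\lambda(E)|$, and $n$ and $L$ both denote $b-a$.
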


\begin{remm}
    In fact, these bounds hold for operators of the form \Cref{anderson} satisfying \Cref{logmoment} with $p \geq 3$. Even for those satisfying it for some $p \in (1,3)$ a similar bound holds but with $CL^{-p}$ replaced by $CL^{1.5 - 1.5p}$.
\end{remm}

These bounds on the transfer matrices are the main new input enabling the localization result. They enable a Wegner estimate, though the proof is somewhat technical and deferred to the last section. They also provide nice bounds on what are called the Green's functions of the operator, which is a sort of finite volume resolvent.

In order to introduce the Green's function, we must first define a few other quantities. First we let $H_{[a,b]}$ denote the truncation of $H$ to the interval $[a,b]$. (More precisely, $H_{[a,b]}$ is the corner $P_{[a,b]} H P_{[a,b]}$, where $P_{[a,b]}$ is the appropriate projection. For any integers $a< b$ and $x,y \in [a,b]$ (here and throughout $[a,b]$ denotes the interval in $\Z$), we define the Green's function as follows:

\begin{equation}\label{greensdef}
    G^E_{[a,b]}(x,y) = \langle \delta_x, (H_{[a,b]}-E)^{-1} \delta_y \rangle
\end{equation}
and formally set $G^E _{[a,b]}= \infty$ if $E \in \sigma(H_{[a,b]})$. (That is, we will use e.g $\P[ G^E_{[a,b]} > C]$ as a shorthand for $\P[ E \notin \sigma(H_{[a,b]})\quad \text{and}\quad G^E_{[a,b]} > C]$.)
\begin{remm}
    $H_{[a,b]}$ can be quite naturally identified with an operator on $\ell^2([a,b])$, i.e. a $(b-a+1)^2$ matrix in the standard basis. In particular $G^E_{[a,b]}$ is defined in terms of this identification; by $(H_{[a,b]}-E)^{-1}$ we mean the $\ell^2([a,b])$ inverse, not the inverse with respect to the whole space.
\end{remm}

The Green's function is a key object because it is closely tied to the behavior of formal solutions to \Cref{eigeneq}, but it satisfies certain useful identities which make it more amenable to analysis than the formal solutions themselves. Specifically, we have the following well known identity relating the Green's function and any formal solution of \Cref{eigeneq}:
\begin{equation}\label{poisson}
    \psi(x) = -G^E_{[a,b]}(x,a)\psi(a-1) - G^E_{[a,b]}(x,b)\psi(b+1) \quad\text{for}\quad x \leq y
\end{equation}

At the same time, the Green's function satisfies many well-known identities which allow one to relate its asymptotics to those of the transfer matrices, as one would suspect from the relation to the formal solutions via \Cref{poisson}.

We let
\[ P_{[a,b]}^E = \begin{cases}
    1\quad &\text{for}\quad b\leq  a\\
    \det T_{[a,b]}^E\quad&\text{for}\quad a < b
\end{cases}\]
Then we have the following:
\[|G_{[a,b]}^E(x,y)| = \frac{|P_{[a,x-1]}^E|\cdot |P_{[y+1,b]}^E|}{|P_{[a,b]}^E|}\quad \text{for}\quad a\leq x \leq y \leq b\]
In particular,
\begin{equation}\label{greens2det}\begin{split}|G_{[x-L/2,x+L/2]}^E(x,x+L/2)| &= \frac{|P_{[x-L/2,x+1]}^E|}{|P_{[x-L/2,x+L/2]}^E|}\\
|G^E_{[x-L/2,x+L/2]}(x,x-L/2)| &= \frac{|P^E_{[x-1,x+L/2]}|}{|P^E_{[x-L/2,x+L/2]}|}
\end{split}
\end{equation}
(Note that we use self-adjointness of $(H_{[a,b]}-E)^{-1}$ for the second calculation.)

Finally, we recall the following well-known fact:

\begin{equation}\label{det2transfer} P_{[a,b]}^E = \left\langle \begin{pmatrix} 1 \\ 0 \end{pmatrix}, T_{[a,b]}^E \begin{pmatrix} 1 \\ 0 \end{pmatrix} \right\rangle \quad \text{for}\quad a \leq b\end{equation}

From these identities we can extract one of the crucial ingredients towards a localization result: probabilistic estimates on the Green's function. While the large deviation results quoted above are a new input, the basic strategy goes back to work of Fr\"ohlich and Spencer.

\begin{prop}\label{initialscalethm}
    Fix $I \subset \R$ a compact interval, and $H$ an Anderson model with single site distribution $\mu$ not supported on a single point and satisfying \Cref{logmoment} for some $p > 3$. Then there is $C > 0$ (depending on $I$) such that

    \[\P[ |G_{[x-L/2,x+L/2]}(x,x\pm L/2)| \geq e^{\lambda(E)L/4}] \leq CL^{-p}\]
    for all $L \in \N$ and $E \in I$.
\end{prop}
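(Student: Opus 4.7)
The plan is to reduce the Green's function estimate to the large deviation estimates on transfer matrix coefficients provided by \Cref{ldesthm}, using the explicit identity \Cref{greens2det}. Combined with \Cref{det2transfer}, we may rewrite
\[
|G^E_{[x-L/2,x+L/2]}(x, x+L/2)| = \frac{|\langle e_1, T^E_{[x-L/2,\,x-1]} e_1\rangle|}{|\langle e_1, T^E_{[x-L/2,\,x+L/2]} e_1\rangle|},
\]
and symmetrically for $G^E_{[x-L/2,x+L/2]}(x,x-L/2)$, so that everything is expressed in terms of matrix coefficients of transfer matrix products over an interval of length $\sim L/2$ (numerator) and length $L$ (denominator).

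Next, I would choose $\ve>0$ small. Since $I$ is compact and $\lambda$ is continuous and strictly positive, $\lambda_0 := \inf_{E \in I}\lambda(E) > 0$, and we may fix $\ve \in (0,\lambda_0/8)$. Applying the third inequality of \Cref{ldesthm} with $x = y = e_1$ to the numerator and denominator separately, we obtain events of probability at least $1 - CL^{-p}$ (absorbing a factor of $2^p$ for the shorter product) on which
\[
|\langle e_1, T^E_{[x-L/2,\,x-1]} e_1\rangle| \leq e^{(L/2)\lambda(E) + (L/2)\ve},\qquad |\langle e_1, T^E_{[x-L/2,\,x+L/2]} e_1\rangle| \geq e^{L\lambda(E) - L\ve}.
\]
On the intersection of these good events, the ratio is bounded by $e^{-(L/2)\lambda(E) + (3L/2)\ve}$, which by the choice of $\ve$ is strictly less than $e^{\lambda(E)L/4}$. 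A union bound over the two estimates (and the analogous pair for the $x-L/2$ endpoint, using self-adjointness of $(H_{[a,b]}-E)^{-1}$) gives the desired $CL^{-p}$ probability bound, uniformly in $E \in I$.

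The only subtlety, and what I expect to be the main technical point, is justifying that the constants may be taken uniformly in $E \in I$: this is precisely the content of \Cref{ldesthm} (as opposed to the pointwise Le Page-type estimates available in the literature before \cite{hurtado2025uniformestimatesrandommatrix}), so the proof is essentially a bookkeeping exercise once that input is accepted. A secondary minor point is the degenerate event $E \in \sigma(H_{[x-L/2,x+L/2]})$ on which $G^E$ is formally infinite; on this event the denominator $\langle e_1, T^E e_1\rangle$ vanishes, so it is automatically contained in the bad event controlled by \Cref{ldesthm}, consistent with the convention set up just before \Cref{greens2det}.
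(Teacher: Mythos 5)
Your proposal is correct and follows essentially the same route as the paper: express the Green's function via \Cref{greens2det} and \Cref{det2transfer} as a ratio of transfer-matrix coefficients, apply the uniform large deviation bounds of \Cref{ldesthm} to numerator and denominator separately, and take $\ve$ small relative to $\lambda_{\min}$. Your version is slightly more careful in spelling out the $2^p$ absorption for the half-length product and the harmless degenerate event $E\in\sigma(H_{[x-L/2,x+L/2]})$, but the underlying argument is identical to the paper's.
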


Before commencing the proof in earnest, we define
\[ \lambda_{\min} = \min_{E \in I} \lambda(E)\]
We will use this shorthand throughout.

\begin{proof}
    By \Cref{ldesthm}, we have, for any $\ve > 0$ and compact $I\subset \R$, the existence of $C,c >  0$ (depending on $I$ and $\mu$) such that with probability at least $1 - L^{1-p''}$, we have the following bounds:
    \begin{align*}
    |P_{[x-L/2,x+L/2]}| &\geq e^{(L+1)(\lambda(E)-\ve)}\\
    |P_{[x-L/2,x+1]}| &\leq e^{(L/2+1)(\lambda(E)-\ve)}\\
    |P_{[x-1,x+L/2]}| &\leq e^{(L/2+1)(\lambda(E)-\ve)}
    \end{align*}
    for any $p'' \in (p', p)$ and $L$ sufficiently large. 
    Indeed, \Cref{det2transfer} gives precisely this. One thus obtains, via \Cref{greens2det}, that with probability at least $1-L^{1-p'}$, we have
    \[ |G_{[x-L/2,x+L/2]}^E(x,x\pm L/2)| \leq e^{(3L/2 + 2)\ve-\lambda(E)L/2 }\]
    Taking $\ve < \lambda_{\min}/6$ gives the result.
\end{proof}
    
\section{Multiscale analysis}
As has been discussed in the introduction, multiscale analysis (MSA) is a central technique in the study of random Schr\"odinger operators. In some sense, the strongest version of MSA known to work with unbounded operators is the bootstrap MSA developed by Germinet and Klein in \cite{germinet2001bootstrap}; stronger versions have since appeared (see e.g. \cite{bourgain2005localization,klein2012comprehensive,ding2020localization,rangamani2023dynamical}), but to our knowledge the arguments bridging from the very weak Green's functions estimates this new MSA can obtain to localization have only been done in the bounded context. 

In general, a multiscale analysis is an inductive method by which one can establish bounds on what is called the Green's function. 
While we have already defined one type of ``interval'', in this section we will also consider another type of interval in order to align with usual MSA arguments. For $L$ even, we let $\Lambda_L(x)$ denote the interval $[x-L/2,x+L/2]$; note that unless we are discussing multiple possibly overlapping boxes, the choice of center does not matter for any probabilistic questions; this is a consequence of stationarity and independence of the potentials at different sites; in particular our theorem statements will discuss $\Lambda_L := \Lambda_L(0)$ explicitly, but it should be understood that such results hold for any choice of center. MSA requires the notion of regular ``boxes'' (intervals in the 1D setting). There are various notions appearing across different papers, but we use (a slight variation of) the notion appearing in \cite{von1989new}.

\begin{deffo}
    We say $\Lambda_L(x)$ is $(m,E)$ regular if
    \[ |G_{[x-L/2,x+L/2]}(x,x\pm L/2)| \leq e^{-mL/2}\]
\end{deffo}

MSA is a method whereby we can obtain bounds on $\P[\Lambda_L\text{ is } (m_0,E)\text{-regular}]$; the essential insight is that if most boxes at a smaller scale are regular, and we can avoid a ``resonance'', then we will have regularity (with a slight loss to exponential rate $m$) at larger scales.

In particular, von Dreifus and Klein proved the following in \cite{von1989new} specifically in the context of Schr\"odinger operators on $\Z$:

\begin{thm}[\cite{von1989new}]\label{msaassumptions}
    Let $H$ be an Anderson model of the form (\ref{anderson}). Let $E_0 \in \R$. Suppose that there is $L_0 >0$ such that
    \begin{enumerate}[label=(P\arabic*)]   \item\label{initialscale}$\P[\Lambda_{L_0}\text{ is } (m_0,E_0)\text{-regular}] \geq 1- L_0^{-q_1}$ for some $q_1 > 1$
        \item\label{Wegner}There exist $\beta \in (0,1)$, $q_2 > 4q_1 + 6$ and $\ve> 0$ such that $\P[ \mathrm{dist}(\sigma(H_{\Lambda_L}), E) \leq e^{-L^\beta}] \leq 1/L^{q_2}$ for some all $L \geq L_0$ and $E$ satisfying $|E-E_0| \leq \ve$
    \end{enumerate}
     Then given any $m \in (0,m_0)$, there exists $B < \infty$ (depending on the various parameters at play excluding $L_0$) such that if $L_0 > B$, we can find $\delta$ (depending on the various parameters at play) such that with probability one, $H$ is exponentially localized in $(E_0 - \delta, E_0 + \delta)$, i.e. the portion of the spectrum contained in $(E_0 - \delta, E_0+\delta)$ is pure point and all eigenfunctions associated to $E$ in this range decay exponentially with rate $m$.
\end{thm}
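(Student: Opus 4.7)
The plan is to run a multiscale induction on a geometrically increasing sequence of scales $L_{k+1} = \lfloor L_k^\alpha \rfloor$ for a fixed $\alpha > 1$ chosen close enough to $1$ that the probability bounds close, tracking a sequence of exponential rates $m_k \searrow m$. The inductive hypothesis at scale $L_k$ is that, uniformly in $E$ in a small neighborhood $(E_0 - \delta, E_0 + \delta)$, $\P[\Lambda_{L_k}(x) \text{ is } (m_k, E)\text{-regular}] \geq 1 - L_k^{-q_1}$. The base case is hypothesis (P1), extended from the single energy $E_0$ to a small interval of energies by combining it with a discretization of the energy interval and hypothesis (P2); this continuity-plus-Wegner seeding is needed because (P1) is only stated at $E_0$ while the induction requires a uniform-in-$E$ version.

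The heart of the argument is the inductive step. Given a box $\Lambda_{L_{k+1}}$, cover it by $N \approx L_{k+1}/L_k$ translates $\Lambda_{L_k}(y_j)$ on a regular grid. A union bound shows that with probability at least $1 - N L_k^{-q_1}$, at most one of these small boxes is $(m_k, E)$-singular. Invoke hypothesis (P2) at scale $L_{k+1}$ to also secure, with probability $1 - L_{k+1}^{-q_2}$, the non-resonance bound $\|(H_{\Lambda_{L_{k+1}}} - E)^{-1}\| \leq e^{L_{k+1}^\beta}$. On the intersection of these good events, expand $G^E_{\Lambda_{L_{k+1}}}(x, x \pm L_{k+1}/2)$ via the geometric resolvent identity as a telescoping chain across the small boxes: each hop through a good small box contributes a decay factor $e^{-m_k L_k/2}$, while the lone possibly-bad block is absorbed into the non-resonance factor $e^{L_{k+1}^\beta}$. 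A careful count yields $(m_{k+1}, E)$-regularity of $\Lambda_{L_{k+1}}$ with $m_{k+1} = m_k - O(L_k^{-1} \log L_k)$ and probability at least $1 - L_{k+1}^{-q_1}$, provided the arithmetic constraint $q_2 > 4q_1 + 6$ is satisfied; this is precisely where that hypothesis enters.

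To pass from Green's function estimates to spectral localization, first propagate the step above to uniform-in-$E$ regularity on $(E_0 - \delta, E_0 + \delta)$ at every scale by discretizing the energy interval on a scale polynomial in $L_k$ and interpolating using the bound $\|\partial_E (H_{\Lambda_L} - E)^{-1}\| \leq e^{2 L^\beta}$ on the non-resonant event. Since the scales grow superpolynomially, $\sum_k L_k^{-q_1} < \infty$, and Borel-Cantelli gives that almost surely, for every site $x$ in any fixed region and every $E$ in the interval, $\Lambda_{L_k}(x)$ is $(m,E)$-regular for all sufficiently large $k$. Now let $\psi$ be a polynomially bounded generalized eigenfunction with generalized eigenvalue $E \in (E_0 - \delta, E_0 + \delta)$. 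Iterating the Poisson formula \Cref{poisson} along a sequence of $(m,E)$-regular boxes $\Lambda_{L_k}(x_k)$ straddling a given site $n$ with $L_k \to \infty$, the Green's function decay $e^{-m L_k/2}$ dominates the polynomial growth of $\psi$ and forces $|\psi(n)| \leq C e^{-m|n|/2}$. In particular every such generalized eigenvalue is a genuine $\ell^2$ eigenvalue with an exponentially decaying eigenfunction, and Sch'nol's theorem then identifies $\sigma(H) \cap (E_0 - \delta, E_0 + \delta)$ with this set of eigenvalues, giving pure point spectrum with exponential decay at rate $m$.

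The main obstacle is the combinatorial probability bookkeeping in the inductive step: one must choose $\alpha$, the rate loss $m_k - m_{k+1}$, and the probability exponents compatibly so that the tails $L_k^{-q_1}$ remain summable, the $m_k$ stabilize at some $m > 0$, and the non-resonance amplification $e^{L_{k+1}^\beta}$ of the single potentially singular block is still dwarfed by the accumulated $e^{-m_k L_k N/2}$ decay from the good blocks. The sharp form of this balance is exactly the inequality $q_2 > 4q_1 + 6$, and making it close cleanly while keeping the energy-uniformity through all scales is the most delicate part of the argument.
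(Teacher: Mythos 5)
The paper does not prove this statement; it is quoted verbatim as a black box from \cite{von1989new} (``we use it as a black box and do not carry out an MSA argument ourselves''). So there is no ``paper's own proof'' to compare against. Your sketch is, instead, an attempt to reconstruct the standard von Dreifus--Klein multiscale argument, and while it captures the high-level skeleton (geometric scale sequence, deterministic geometric-resolvent expansion across sub-boxes, Wegner controlling the lone resonance, Borel--Cantelli plus Sch'nol and the Poisson formula at the end), there is a genuine error in the probabilistic bookkeeping at the heart of the induction.

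Specifically, you claim ``a union bound shows that with probability at least $1 - N L_k^{-q_1}$, at most one of these small boxes is $(m_k,E)$-singular.'' With $N \sim L_{k+1}/L_k = L_k^{\alpha-1}$, this quantity is $1 - L_k^{\alpha - 1 - q_1}$, and to close the induction you need it to dominate $1 - \tfrac12 L_{k+1}^{-q_1} = 1 - \tfrac12 L_k^{-\alpha q_1}$, which forces $\alpha - 1 - q_1 \le -\alpha q_1$, i.e. $\alpha \le 1$. That contradicts $\alpha > 1$, so the union-bound estimate never closes. The actual argument requires you to estimate the probability that \emph{two} disjoint sub-boxes are simultaneously singular; by independence of the potential on disjoint boxes this is $\le L_k^{-2q_1}$ per pair, and summing over $\binom{N}{2}$ pairs gives $O(N^2 L_k^{-2q_1}) = O(L_k^{2\alpha - 2 - 2q_1})$, which is compatible with $L_k^{-\alpha q_1}$ for some $\alpha \in (1, 2(1+q_1)/(2+q_1))$, a nonempty range since $q_1 > 1$. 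Independence is therefore not an incidental convenience here; it is indispensable, and your sketch omits it at the one place the induction actually turns. Relatedly, you assert that the arithmetic constraint $q_2 > 4q_1 + 6$ is ``precisely where that hypothesis enters'' but never derive it from the choice of $\alpha$, the pair-counting exponent $2q_1$, and the Wegner exponent; as written the reader cannot check that the constants close. The passage from a single energy $E_0$ to an interval, and the final Poisson/Sch'nol step, are described at the right level of generality, but they too rest on the same closed induction, so fixing the pair-counting step is the priority.
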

\begin{remm}We have formulated a specifically one dimensional version of the result from \cite{von1989new}; the result is true in arbitrary dimension with some modifications. Moreover, there have been significant advances since said work. Notably  a bootstrap MSA was introduced by Germinet and Klein in \cite{germinet2001bootstrap}, which can work with a much weaker estimate than \ref{initialscale}. However, said variant also requires a much stronger version of \ref{Wegner} than we can prove. As has been mentioned before, a variant of MSA capable of dealing with even weaker estimates was introduced and developed in \cite{bourgain2005localization,klein2012comprehensive}, but currently it seems this variant requires boundedness.

\end{remm}

While the theorem of von Dreifus and Klein, on the face of it, only proves localization in small intervals, we will use basic union bounds to extend the result to the whole real line. In particular, the proof of \Cref{mainlocalthm} is more or less reduced to verifying the two conditions condition \ref{initialscale} and condition \ref{Wegner} for all energies. This is the content of \Cref{initialscalethm2} and \Cref{wegthm} below.

The first result, verifying condition \ref{initialscale}, we have more or less already proven; it follows from \Cref{initialscalethm}, and the derivation is very brief.  The second condition, often called a Wegner estimate, we prove using roughly the argument of Carmona, Klein and Martinelli in \cite{carmona1987anderson} to obtain the same under their stronger moment assumption. There are some technical differences in the argument owing to the estimates being uniform in energy but weaker than those exploited in \cite{carmona1987anderson}.

\begin{thm}\label{initialscalethm2}
    Given any compact interval $I \subset \R$, and $H$ with single site distribution $\mu$ satisfying \Cref{logmoment} for $p > 2$, there exists $m > 0$ such that
    \begin{equation}
        \P[\Lambda_L\text{ is } (E,m)\text{-regular}] \leq 1-L^{2-p}
    \end{equation}
    for $L$ sufficiently large.
\end{thm}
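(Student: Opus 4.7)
The stated inequality appears to have its direction reversed: for $p > 2$ we have $1 - L^{2-p} < 1$, so $\P[\Lambda_L\text{ is }(E,m)\text{-regular}] \leq 1 - L^{2-p}$ literally asserts that $\Lambda_L$ is \emph{irregular} with probability at least $L^{2-p}$, which is the opposite of what hypothesis \ref{initialscale} of \Cref{msaassumptions} requires. I will sketch a proof of the natural reading
\[
    \P[\Lambda_L \text{ is } (m,E)\text{-regular}] \geq 1 - L^{2-p},
\]
which is exactly \ref{initialscale} with $q_1 = p - 2$.

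The plan is essentially to read off the conclusion from \Cref{initialscalethm}, whose proof in fact already establishes the desired bound. Set $m := \lambda_{\min}/3$, where $\lambda_{\min} := \min_{E \in I} \lambda(E) > 0$ by continuity and positivity of $\lambda$ on the compact set $I$. The proof of \Cref{initialscalethm} shows that for any $\ve > 0$, with probability at least $1 - L^{1-p''}$ for any $p'' \in (1,p)$, the estimate
\[
    |G^E_{[-L/2,L/2]}(0, \pm L/2)| \leq e^{-\lambda(E)L/2 + (3L/2 + 2)\ve}
\]
holds uniformly in $E \in I$. Choosing $\ve = \lambda_{\min}/12$ makes the right-hand side at most $e^{-mL/2}$ for $L$ large, i.e.\ $(m,E)$-regularity. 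Explicitly, this uses the three uniform LDE estimates from \Cref{ldesthm} applied to $T^E_{[-L/2,L/2]}$, $T^E_{[-L/2, 1]}$ and $T^E_{[-1, L/2]}$, together with the Green's function identities \Cref{greens2det} and \Cref{det2transfer}, by a three-way union bound.

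It only remains to compare $L^{1 - p''}$ with $L^{2 - p}$: since $p > 2$, one can pick $p'' \in (p-1, p)$, which gives $L^{1-p''} < L^{2-p}$, with the multiplicative constants from the union bound absorbed into the ``$L$ sufficiently large'' clause. For the range $p \in (2, 3)$ where \Cref{ldesthm} yields only the weaker rate $L^{1.5 - 1.5p}$ (per the remark following it), the same scheme applies because $1.5 - 1.5p \leq 2 - p$ whenever $p \geq -1$. The only obstacle is this bookkeeping; no new probabilistic ingredient beyond \Cref{ldesthm} and the deterministic identities \Cref{greens2det}, \Cref{det2transfer} is required, so the result is genuinely a corollary of \Cref{initialscalethm} once the exponents are matched.
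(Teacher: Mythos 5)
Your proof is correct and takes essentially the same route as the paper: the paper's own proof is a one-line deduction from \Cref{initialscalethm} (choosing $m < \lambda_{\min}/4$ there), and your choice $m = \lambda_{\min}/3$, $\ve = \lambda_{\min}/12$ together with the exponent bookkeeping $L^{1-p''} \leq L^{2-p}$ for $p'' \in (p-1,p)$ is just that same argument made explicit. Your reading of the inequality as $\geq 1 - L^{2-p}$ is indeed the intended statement (it is what hypothesis \ref{initialscale} requires), and your treatment of the $p \in (2,3)$ rate via the remark after \Cref{ldesthm} is a point the paper glosses over but handles identically in spirit.
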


\begin{proof}
    For \[m < \frac{1}{4}\lambda_{\min}\] the desired regularity follows by \Cref{initialscalethm}.
\end{proof}

\section{Wegner estimate}
Here we prove the Wegner estimate, verifying that condition \ref{Wegner} holds.
\begin{thm}\label{wegthm}
Given $H$ with single site distribution $\mu$ non-trivial and satisfying \Cref{logmoment} for $p > 2$, and fixed compact $I \subset \R$ we have, for any $\beta \in (0,1)$:
\[ \P[\mathrm{dist}(\sigma(H_{\Lambda_L}),E) \leq e^{-L^\beta}] \leq L^{1-\eta p'}\]
for any $0 < \eta < \beta$ and $p' \in (1,p)$ and $L$ sufficiently large, with the requisite largeness of $L$ depending on the interval $I$, $\beta$, $p'$, $\eta$ and $\mu$.
\end{thm}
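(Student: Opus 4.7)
The plan, following the strategy indicated in the introduction, is to reduce the Wegner estimate to a bound on the maximum matrix entry of the Green's function, and then control the latter via the determinantal formula combined with \Cref{ldesthm}. Using $\|A\|_{\mathrm{op}} \leq (L+1)\max_{x,y}|A(x,y)|$ and $\mathrm{dist}(\sigma(H_{\Lambda_L}),E) = 1/\|(H_{\Lambda_L}-E)^{-1}\|_{\mathrm{op}}$, the Wegner event is contained in $\{\max_{x,y}|G^E_{\Lambda_L}(x,y)| \geq e^{L^\beta}/(L+1)\}$, and it suffices to bound the probability of the latter. The determinantal formula $|G^E_{\Lambda_L}(x,y)| = |P^E_{[a,x-1]}||P^E_{[y+1,b]}|/|P^E_{\Lambda_L}|$ together with \eqref{det2transfer} allows us to apply \Cref{ldesthm} with parameter $\ve$ to each of the three transfer matrix determinants, yielding on a good event a bound of the form $|G^E_{\Lambda_L}(x,y)| \leq e^{-\lambda(E)(y-x) + O(L\ve)}$. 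Choosing $\ve$ polynomially small of order $L^{\beta-1}$ makes this bound at most $e^{cL^\beta}$ for a suitable $c < 1$, giving the required control on $\max_{x,y}|G^E_{\Lambda_L}(x,y)|$ for $L$ large.

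The key probabilistic efficiency to exploit is that the good event can be specified in terms of LDT bounds holding for each of the $O(L)$ possible sub-intervals $[a,x-1]$, $[y+1,b]$, and $[a,b]$, rather than separately for each of the $O(L^2)$ pairs $(x,y)$; union-bounding over sub-intervals gives failure probability of order $C(\ve)\cdot L^{1-p'}$. Assuming (as can be extracted by tracing through \cite{hurtado2025uniformestimatesrandommatrix}) that the LDT constant $C(\ve)$ grows at most polynomially in $1/\ve$, the choice $\ve \sim L^{\beta-1}$ introduces only a factor of the form $L^{O(1-\beta)}$, yielding a bound dominated by $L^{1-\eta p'}$ for $p'$ close to $p$ and $\eta \in (0,\beta)$ in a suitable range.

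The primary obstacle will be the tension between choosing $\ve$ small enough to keep $O(L\ve)$ below $L^\beta$ and the associated degradation of the LDT constant, which constrains how aggressively we can shrink $\ve$ relative to the final exponent in the Wegner bound. A secondary subtlety, corresponding precisely to the ``technical difficulties'' noted in the introduction, is that \Cref{ldesthm} is an asymptotic statement that does not apply to very short sub-intervals $[a,x-1]$ or $[y+1,b]$ arising near the boundary of $\Lambda_L$; for these we must replace the LDT bound by the deterministic estimate $|P^E_I| \leq (|E|+\max_{n \in \Lambda_L}|V_n|+1)^{|I|}$, using the log-moment condition applied via Markov to ensure $\max|V_n| \leq e^{L^s}$ with probability at least $1-L^{1-sp}$ for appropriate $s$, and verifying that the resulting contribution is also dominated by $e^{O(L^\beta)}$. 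This boundary treatment was not needed in \cite{carmona1987anderson}, whose stronger moment assumption gave a uniform polynomial bound on $\max|V_n|$.
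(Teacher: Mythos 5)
Your approach takes a genuinely different route from the paper, and it contains a critical gap that I do not believe can be repaired with the tools available.

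The paper's proof is a polynomial-moment adaptation of the Carmona--Klein--Martinelli argument: it decomposes the Wegner event into four sub-events $\mathcal{C}_1,\dots,\mathcal{C}_4$, bounds $\mathcal{C}_1$ using H\"older continuity of the integrated density of states via \Cref{holdercor}, bounds $\mathcal{C}_2,\mathcal{C}_3$ by a perturbative comparison of $T^{E'}$ and $T^E$ over short \emph{boundary} subintervals $[-L,-L+L^\eta]$ and $[L-L^\eta,L]$ combined with a Chebyshev inequality for $\log^{p\star}$ moments, and only applies \Cref{ldesthm} at a \emph{fixed} $\ve$ (namely $\ve < \lambda_{\min} - \kappa$) to bound $\mathcal{C}_4$. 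No shrinking of $\ve$ with $L$ ever occurs; the entire point of the decomposition is to avoid it.

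The critical gap in your proposal is precisely this shrinking. \Cref{ldesthm} asserts a bound $C(\mu,I,\ve)\,L^{-p}$ where the $\ve$-dependence of $C$ is unspecified. To push the Green's function bound from the LDT below $e^{L^\beta}$ with $\beta<1$, you correctly observe that you must take $\ve \sim L^{\beta-1}\to 0$. But the assertion that $C(\ve)$ grows at most polynomially in $1/\ve$ is not stated anywhere in the cited work, is not needed anywhere else in this paper (which is why \Cref{ldesthm} is stated without it), and there is no reason to expect it in the heavy-tailed regime, where even the leading LDT rate is only polynomial. You flag this as something that ``can be extracted by tracing through'' the reference, but that is exactly the step that would require a new theorem; the proof cannot stand without it. It is worth noting that even Carmona--Klein--Martinelli, working with fractional moments and \emph{exponential} LDT, did not pursue the direct Green's-function bound and instead introduced the H\"older-continuity lemma --- strong circumstantial evidence that the quantitative $\ve$-dependence needed for your route is not favorable.

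Your secondary observation about short boundary subintervals identifies a real phenomenon, but it is not where the paper's technical difficulty lies: the paper's treatment of the boundary windows $[\pm L \mp L^\eta, \pm L]$ is via the perturbative $E$ vs.\ $E'$ comparison and moment bounds on $\log^{p\star}(\Gamma)$, not via applying the LDT to short subintervals and falling back on a deterministic norm bound. Moreover, your fallback deterministic estimate $|P^E_I| \le (|E|+\max|V_n|+1)^{|I|}$ combined with $\max |V_n| \le e^{L^s}$ produces a contribution of size $e^{|I| L^s}$, which is not dominated by $e^{O(L^\beta)}$ without additional constraints tying $s$, $|I|$, and $\beta$ together that you have not spelled out.
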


Our proof is based off the strategy of \cite[Theorem 4.1]{carmona1987anderson}, and the following lemma is a technical lemma appearing in their paper.

\begin{lem}[\cite{carmona1987anderson}]\label{holdercor}
    Let $H$ be as in \Cref{wegthm}. Then given any $I \subset \R$ compact, there are $\rho$ and $C$ positive, depending on $\mu$ and $I$, such that
    \[ \P[\mathcal{E}_{E,L,\ve}] \leq CL\ve^\rho \]
    where $\mathcal{E}_{E,L,\ve}$ denotes the event that there is an eigenvalue $E'$ of $H_{[-L,L]}$ with a normalized eigenfunction $\vp'$ such that $|\vp'(-L)|^2 + |\vp'(L)|^2 \leq \ve^2$.
\end{lem}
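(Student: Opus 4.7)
The plan is to adapt the argument of Carmona--Klein--Martinelli, reformulating $\mathcal{E}_{E,L,\ve}$ as a condition on transfer matrix products and bounding it using H\"older-type regularity of the associated invariant measures. The quantitative input in our weak-moment regime comes from \Cref{ldesthm} and the associated regularity statements in \cite{hurtado2025uniformestimatesrandommatrix}.

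First, I would reformulate the event in transfer matrix language. Any normalized eigenfunction $\vp'$ of $H_{[-L,L]}$ at energy $E'$ satisfies $\vp'(-L-1)=\vp'(L+1)=0$ and is, up to the scalar $\alpha=\vp'(-L)$, the forward solution $\tilde U_n(E')$ determined by $\tilde U_{-L-1}=0$, $\tilde U_{-L}=1$, so that $(\tilde U_{n+1}, \tilde U_n)^T = T^{E'}_{[-L-1,n]} e_1$. The eigenvalue condition then becomes $\tilde U_{L+1}(E')=0$, normalization gives $\alpha^{-2}=\sum_{n=-L}^{L}\tilde U_n(E')^2$, and the condition $|\vp'(-L)|^2+|\vp'(L)|^2\le \ve^2$ rewrites as $1+\tilde U_L(E')^2 \le \ve^2 \sum_n \tilde U_n(E')^2$. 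Equivalently, the event holds iff there exists $E'$ (in a compact window determined by $I$) satisfying both $\langle e_1, T^{E'}_{[-L-1,L]}e_1\rangle = 0$ and $\|T^{E'}_{[-L-1,L]}e_1\| \le \ve\sqrt{2L+2}\,\max_n \|T^{E'}_{[-L-1,n]}e_1\|$; that is, the terminal vector $T^{E'}_{[-L-1,L]}e_1$ is short relative to the maximum along the trajectory.

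Next I would run a net argument on $E'$ in a compact window. Since $T^{E'}_{[-L-1,L]}$ is polynomial in $E'$ of degree at most $2L+1$, its derivative in $E'$ is controllable in terms of the potentials and $L$, yielding a polynomial Lipschitz dependence (after a high-probability truncation of $\max_{|n|\le L}|V_n|$ using \Cref{logmoment}). For each net point $E_k$ I would bound the probability that the quasi-alignment condition above holds throughout a $\delta$-neighborhood of $E_k$ by a bound of the form $C\ve^\rho$ (up to mild losses absorbed into $\rho$), coming from a H\"older-type regularity estimate for the stationary measure of the random matrix product. In the CKM setting this H\"older bound is Le Page's theorem applied under strong moments; in our setting the polynomial regularity from \cite{hurtado2025uniformestimatesrandommatrix} plays the analogous role. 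Summing over the $O(1/\delta)$ net points, with $\delta$ chosen as a suitable small power of $\ve$, and absorbing the polynomial-in-$L$ losses, should give the claimed bound $CL\ve^\rho$.

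The main obstacle will be the pointwise H\"older bound at each net point under only the log-moment assumption: instead of invoking Le Page's theorem, I would need to extract the required exponent from the large-deviation framework of \cite{hurtado2025uniformestimatesrandommatrix}, most naturally by reinterpreting the short-terminal-vector condition as the requirement that $e_1$ lies close to the contracting direction of $T^{E_k}_{[-L-1,L]}$, whose distribution has the requisite polynomial regularity. A secondary issue, not present under CKM's stronger tail assumption, is that heavy-tailed outlier potentials can produce very well-localized interior eigenfunctions with tiny boundary values; this is resolved by the fact that the constants depend on the compact $I$, effectively confining the relevant eigenvalues to a bounded region where the Lipschitz and net arguments remain polynomial.
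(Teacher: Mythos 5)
Your transfer-matrix reformulation of the event is fine, but the two quantitative steps that carry the proof do not hold up. First, the per-energy bound of the form $C\ve^\rho$ is exactly the crux, and the input you propose for it is not available: power-law (H\"older) regularity of the stationary measure on $\P^1$ is a Guivarc'h/Le Page-type statement requiring $\E\|T_0^E\|^{\delta}<\infty$, i.e.\ precisely the fractional-moment hypothesis this lemma is meant to avoid, while the estimates of \Cref{ldesthm} are polynomial large deviation bounds: applied at scale $n\sim\log(1/\ve)$ they yield only $(\log(1/\ve))^{-p}$, never $\ve^{\rho}$. Second, the net argument loses far more than ``polynomial-in-$L$'' factors: the $E$-derivative of $T^{E}_{[-L-1,L]}$ is a sum of products of $\sim 2L$ one-step matrices and is generically of size $e^{cL}$ (even after truncating the potential), so the admissible mesh is $\delta\sim \ve e^{-cL}$ and the union over the net contributes a factor $e^{cL}/\ve$. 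This destroys the bound $CL\ve^{\rho}$ in the regime where the lemma is actually used in the proof of \Cref{wegthm}, where $\ve$ is only stretched-exponentially small ($\ve=e^{-c L^{\eta}}$, $\eta<1$). A smaller point: the event $\mathcal{E}_{E,L,\ve}$ quantifies over \emph{all} eigenvalues of $H_{[-L,L]}$, and with heavy tails these need not lie in any fixed compact window; the dependence of constants on $I$ does not ``confine'' them, so that issue is not resolved as you claim (though in the later application only eigenvalues near $E$ matter).

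For comparison, the paper does not reprove this lemma by a projective-space or net argument at all: it invokes the proof of Carmona--Klein--Martinelli verbatim, observing that its only quantitative input is H\"older continuity of the integrated density of states, which was established in the same paper under the log-moment condition with $p\geq 1$. In that argument the smallness of boundary values forces each such eigenvalue to be $\ve$-close to an eigenvalue of a nearby (enlarged or decoupled) box operator, and the expected number of such eigenvalues in an interval of length $\sim\ve$ is controlled by $L$ times an increment of the IDS, giving $CL\ve^{\rho}$ directly; no regularity of the transfer-matrix stationary measure, and no energy net, is needed. If you want a self-contained proof under the log-moment hypothesis, the missing ingredient you should supply is H\"older continuity of the IDS (or an equivalent eigenvalue-counting estimate), not a Le Page-type regularity statement.
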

It was proven in \cite{carmona1987anderson} for operators generated by potentials with fractional moments, but the proof only required H\"older continuity of the density of states, which in particular was shown (in the same paper) for non-trivial $\mu$ satisfying \Cref{logmoment} with $p\geq 1$.

The last necessary notion before we prove \Cref{wegthm} a certain function which agrees with a power of the logarithmic function at large values, but is linear at small values, introduced (to our knowledge at least) in \cite{hurtado2025uniformestimatesrandommatrix}. Specifically, we define a function for $x \geq 0$:
\begin{equation}
    \log^{p\star}(x) = \begin{cases}
        (\log x)^p \quad &\text{for}\quad x \geq e^p\\
        \left(\frac{p}{e}\right)^p x \quad &\text{for} \quad x < e^p
    \end{cases}
\end{equation}
These functions have the following important properties for $p\geq 1$:
\begin{enumerate}
    \item $\log^{p\star}$ is concave
    \item $\log^{p\star}(0)$ is subadditive
    \item $\log^{p\star}(xy) \leq C_p \log^{p\star}(x)\log^{p\star}(y)$ for some $C_p \geq 1$
\end{enumerate}
In fact, if $p \geq e$, one can take $C_p = 1$; in general one can achieve $C_p= 1$ by rescaling. Since we are principally interested in the case $p > 11$, we will proceed under the assumption $C_p = 1$ without rescaling and omit it to keep notation lighter.

Crucially, if a measure $\mu$ satisfies \Cref{logmoment} for some $p \geq 1$, then for the same $p$ it satisfies:
\begin{equation}
    \int \log^{p\star}(x)\,d\mu(x)
\end{equation}
However, moments with respect to $\log^{p\star}$ are more tractable due to its various nice properties.

Now we prove \Cref{wegthm}, adapting the argument from \cite{carmona1987anderson}.

\begin{proof}[Proof of \Cref{wegthm}]
    As we have discussed, if $\mu$ satisfies \Cref{logmoment}, then 
    \[ \E[\log^{p\star} (\|T_0^E\|)] < \infty\]
    for any given $E$, and moreover for any compact $I \subset \R$,
    \[ M:= \sup_{E \in I}\E[\log^{p\star}(\|T_0^E\|)]<\infty \]
   Throughout, to keep computations legible, we omit floor symbols; when $L^{\eta}$ appears somewhere an integer argument is necessary we mean $\lfloor L^{\eta} \rfloor$. The $O(1)$ difference is negligible in all calculations and estimates. We consider two families of events:
    \begin{align}
        \mathcal{A}_{\theta,L}^E &:= \left\{\|A_{[-L, -L+L^{\eta}]}^E \begin{pmatrix} 0 \\ 1 \end{pmatrix}\| \geq \exp(\theta L^{\eta})\right\}\\
        \mathcal{B}_{\theta,L}^E &:= \left\{\|A_{[L-L^{\eta}, L]}^E \begin{pmatrix} 1 \\ 0 \end{pmatrix}\| \geq \exp(\theta L^{\eta})\right\}
    \end{align}
Letting $E'$ range over eigenvalues of $H_{[-L,L]}$, we can bound $\P[\mathrm{dist}(\sigma(H_{[-L,L]},E) < e^{-L^\beta}]$ by the sum of probabilities for the following four events, for some $\kappa >0$ to be determined:
\begin{align*}
    \mathcal{C}_1 &:= \{\mathrm{dist}(\sigma(H_{[-L,L]},E) < e^{-L^\beta}]\} \cap \bigcap_{|E'-E|< e^{L^\beta}} (\mathcal{A}_{\kappa/2, L}^{E'} \cap \mathcal{B}_{\kappa/2, L}^{E'})\\
    \mathcal{C}_2 &:= \mathcal{A}_{\kappa,L}^E \cap \mathcal{B}_{\kappa, L}^E \cap \bigcup_{|E'-E| < e^{-L^\beta}} (\mathcal{A}_{\kappa/2, L}^{E'})^C\\
    \mathcal{C}_3 &:= \mathcal{A}_{\kappa,L}^E \cap \mathcal{B}_{\kappa, L}^E \cap \bigcup_{|E'-E| < e^{-L^\beta}} (\mathcal{B}_{\kappa/2, L}^{E'})^C\\
    \mathcal{C}_4 &:=(\mathcal{A}_{\kappa, L}^E)^C \cup (\mathcal{B}_{\kappa, L}^E)^C
\end{align*}

We estimate $\P[\mathcal{C}_1]$ using \Cref{holdercor}. By a standard argument using the Poisson formula (see the proof of \cite[Theorem 4.1]{carmona1987anderson}) the event $\mathcal{C}_1$ implies, for every $E'$ satisfying $|E'-E| < e^{- L^\beta}$, the existence of a corresponding normalized eigenfunction $\vp' \in \ell^2([-L,L])$ with 
\[ \max\{|\vp'(L)|^2, |\vp'(-L)|^2\} \leq 2\exp(\kappa L^{\eta}/2)\]
Note that for large $L$ the intervals $[-L+L^{\eta}, L]$ and $[-L, L-L^{\eta}]$ are of size at least $L/2$; we obtain
\begin{equation}\label{c1bound}\P[\mathcal{C}_1] \leq CL\exp(-\rho\kappa L^{\eta}/2)\end{equation}
by \Cref{holdercor}. (Note that in particular we take $\ve = \exp(-\rho\min\{ \kappa L^{\eta}/2, L^\beta\})$, which for sufficiently large $L$, is just $e^{-\rho \kappa L^{\eta}/2}$).

The arguments for bounding $\P[\mathcal{C}_2]$ and $\P[\mathcal{C}_3]$ are almost identical; we only treat the former explicitly. Note that one step transfer operators $\begin{pmatrix} E - V_k & -1 \\ 1 & 0 \end{pmatrix}$ and $\begin{pmatrix} E' - V_k & -1 \\ 1 & 0 \end{pmatrix}$ are separated by a rank one operator of norm equal to $|E'-E|$. In particular, if $\mathcal{C}_2$ holds, then we have

\begin{align*}\exp(-\kappa L^\beta/2) &\geq \left\|A^{E'}_{[-L,-L+L^{\eta}]}\begin{pmatrix} 0 \\ 1 \end{pmatrix}\right\| \\
&\geq \left\|A^{E}_{[-L,-L+L^{\eta}]}\begin{pmatrix} 0 \\ 1 \end{pmatrix}\right\| - \left\|A^{E'}_{[-L,-L+L^{\eta}]}-A^E_{[-L,-L+L^{\eta}]}\right\|\\
&\geq \exp(-\kappa L^\beta) - \left\|A^{E'}_{[-L,-L+L^{\eta}]}-A^E_{[-L,-L+L^{\eta}]}\right\|
\end{align*}

In particular, one obtains for large $L$ the estimate $\|A^{E'}_{[-L,-L+L^{\eta}]}-A^E_{[-L,-L+L^{\eta}]}\| \geq \exp(-\kappa L^\beta)/2$. Letting $\Gamma :=\|A^{E'}_{[-L,-L+L^{\eta}]}-A^E_{[-L,-L+L^{\eta}]}\|$, by Chebyshev:
\begin{align*} \P[\mathcal{C}_2] &\leq \P[\Gamma > \exp(\kappa L^\beta)/2]\\
&\leq \P[\log^{p\star}(\Gamma) > \log^{p\star}(\exp(\kappa L^\beta)/2)]\\
&\leq C \kappa^p L^{-\beta p}\E[\log^{p\star}(\Gamma)]
\end{align*}
We can bound $\Gamma$ as follows:
\[ \Gamma \leq \sum_{k=1}^{L^{\eta}+1} \sum_{i_1 < \cdots <i_k}\exp(-k L^\beta) \left\|\begin{pmatrix} E - V_{i_k} & -1 \\ 1 & 0 \end{pmatrix}\right\| \cdots \left\|\begin{pmatrix} E- V_{i_1} & -1 \\ 1 & 0 \end{pmatrix}\right\| \]
(This is actually a dramatic overestimate; it is not necessary to count over all possible choices of $k$ indices. However, by overcounting here we can make use of nice combinatorial identities later.) Using concavity, subadditivity, the fact that it is submultiplicative up to a constant, and the fact that $\log^{p\star}$ is linear for small values, we obtain for some $C > 1$:
\begin{align*} \E[\log^{p\star}(\Gamma)] &\leq \E\left[\log^{p\star}\left(\sum_{k=1}^{L^{\eta}}\sum_{i_1 < \cdots < i_k}|E-E'|^k\|T^E_{L^{\eta}+1-k}\|\right)\right]\\
&\leq \E\left[\sum_{k=1}^{L^{\eta}}\sum_{i_1 < \cdots < i_k}\log^{p\star}(|E-E'|)^k\prod_{\ell\notin \{i_1,\cdots,i_k\}}^k \log^{p\star}(\|T^E_{\ell}\|)\right]\\
&\leq \sum_{k=1}^{L^{\eta}} \binom{L^{\eta}+1}{k}\left(\frac{p}{e}\right)^{pk}\exp(- kL^{\beta})M^{L^{\eta}+1-k}\\
&=(CM+\left(\frac{p}{e}\right)^p\exp(- L^{\beta}))^{L^{\eta}+1} - M^{L^{\eta}+1}\\
&\leq C(L^{\eta}+1)M^{L^{\eta}}\left(\frac{p}{e}\right)^p\exp(- L^{\beta})
\end{align*}
(Recall that $M = \sup_{E \in I} \E\left[\log^{p\star}\left(\|T_0^E\| \right)\right]$.) We have exploited that $\exp(- L^{\beta})$ is small and linearized $x \mapsto x^{L^{\eta}+1}$ in the last step. Because $2(L^{\eta}+1)(M+1)^{L^{\eta}}\exp(-L^{\delta\beta}))$ is small one obtains finally:

\[ \E[\log^{p\star}(\Gamma)] \leq CL^{\eta}(M+1)^{L^{\eta}}\exp(-L^{\beta})\]
for some $C > 0$.

This yields \begin{equation}\label{c2bound}\P[\mathcal{C}_2] \leq C \kappa^p L^{\eta-\beta p}(M+1)^{L^{\eta}}\exp(-L^{\beta}) \end{equation}
which is, for large $L$, dominated by $\frac{1}{2} \exp(-L^\beta)$.

A similar argument gives
\begin{equation}
    \P[\mathcal{C}_3] \leq \frac{1}{2}\exp(-L^\beta)
\end{equation}
Finally, to bound $\mathcal{C}_4$, we note that it is a union of at most $2L$ large deviation events. We make this explicit in the case of $\mathcal{A}_{\kappa,L}^E$. Clearly said event holds if and only if
\begin{equation}
    \frac{1}{L^{\eta}}\log \left\| A^E_{[-L,-L+L^{\eta}]}\begin{pmatrix} 0 \\ 1 \end{pmatrix}\right\| \geq \kappa 
\end{equation}
By \Cref{ldesthm}, if $\kappa < \lambda_{\min}$, we obtain a bound $\P[\mathcal{C}_4] \leq 2L^{1- \eta p''}$ for any $p'' \in (p', p)$. Hence we obtain
\begin{equation}
    \P[\mathcal{C}_1 \cup \mathcal{C}_2 \cup \mathcal{C}_3 \cup \mathcal{C}_4] \leq L^{1- \eta p'}
\end{equation}
for $L$ sufficiently large by combining the bounds on all terms.
\end{proof}
Finally, we prove \Cref{mainlocalthm}.
\begin{proof}[Proof of \Cref{mainlocalthm}]
If $p > 11$, one can pick $q_1 > 1$, $q_2 > 4q_1 + 6$, $\beta \in (0,1)$, $\eta \in (0,\beta)$ and $p'< p$ such that particular $1+\eta - \eta p' < -q_2$. Indeed, let $\kappa = \min\{p-11, .01\}$, and take $q_1 = 1+\kappa/16$, $q_2 = 10 + \kappa/4$, $\eta = \frac{11 + \kappa/4}{11+ \kappa/2}$, and $p' = \frac{1}{\eta}(11 + \eta + \kappa/2)$. By \Cref{initialscalethm2}, we have condition \ref{initialscale}, and by \Cref{wegthm}, we have condition \ref{Wegner}. Thus, the hypotheses of \Cref{msaassumptions} hold, and so we have almost sure localization on a small interval around any fixed energy $E$. We take a countable collection of intervals $(E_i - \delta_i, E_i + \delta_i)$ covering $\R$ for which almost sure localization holds via said theorem; taking an intersection over all the events gives almost sure localization for all energies. Such a countable collection exists by the fact that $\R$ is Lindel\"of, i.e. any open cover has a countable subcover, and the theorem follows.
\end{proof}
\bibliographystyle{amsalpha}
\bibliography{uniflyap}

\end{document}